\newtheorem{theorem}{Theorem}[section]
\newtheorem{lemma}{Lemma}[section]
\newtheorem{proposition}{Proposition}[section]
\title{\textbf{On inhomogeneity parameters for Backus average}}
\author{
Filip P. Adamus%
\footnote{
Department of Earth Sciences, Memorial University of Newfoundland, Canada, {\tt adamusfp@gmail.com}}\,,
Ayiaz Kaderali\footnote{%
Department of Earth Sciences, Memorial University of Newfoundland, Canada, {\tt ayiazkaderali@gmail.com}}
}
\date{}
\begin{document}

\maketitle
\begin{abstract}
In this paper, we discuss five parameters that indicate the inhomogeneity of a stack of parallel isotropic layers. 
We use field data to check their applicability.
We show that, in certain situations, they provide further insight into the intrinsic inhomogeneity of a Backus medium, as compared to the Thomsen parameters. 
Additionally, we show that the Backus average of isotropic layers is isotropic if and only if $\gamma=0$\,. 
This is in contrast to parameters $\delta$ and $\epsilon$, whose zero values do not imply isotropy.
\end{abstract}
\section{Introduction}
In this paper, we consider an inhomogeneous stack of thin, isotropic and parallel layers. We examine several parameters, which, by using the \citet{Backus1962} and \citet{Voigt1910} averages, indicate the strength of inhomogeneity. Using the former, we consider an inhomogeneous stack of thin isotropic layers, as a homogeneous transversely isotropic medium. In other words, the Backus average is a homogenization of inhomogeneity. The Voigt average represents an anisotropic medium, as the closest---in a Frobenius sense---isotropic counterpart. Among the parameters that we consider, we include the \citet{Thomsen1986} parameter $\gamma$. In addition to indicating anisotropy of the resulting transversely isotropic medium, $\gamma$ shows the inhomogeneity of the stack of layers. 
Specifically, we emphasize two parameters that refer to different methods of homogenization of isotropic layers to their isotropic counterparts. 
\section{Background}\label{sec:parameters}
\subsection{Backus and Voigt averages} \label{sec:BV}
According to \citet{Backus1962}, a sequence of thin parallel isotropic layers can be considered as a transversely isotropic medium. 
One of the few restrictions imposed by \citet{Backus1962} is that of long wavelengths and fine layering. 
The following elasticity parameters constitute the elasticity tensor that characterizes the medium resulting from the averaging process.
\begin{equation*}
   \label{bac1}
    c_{1111}^{\overline{\rm TI}}=\overline{\left(\frac{c_{1111}-2c_{2323}}{c_{1111}}\right)}^2\,\overline{\left(\frac{1}{c_{1111}}\right)}^{-1}+\overline{\left(\frac{4(c_{1111}-c_{2323})c_{2323}}{c_{1111}}\right)}\,,
\end{equation*}
\begin{equation*}
\label{bac3}
 c_{1133}^{\overline{\rm TI}}=\overline{\left(\frac{c_{1111}-2c_{2323}}{c_{1111}}\right)}\,\overline{\left(\frac{1}{c_{1111}}\right)}^{-1}\,,
 \end{equation*}
 \begin{equation}
    \label{bac4}
     c_{1212}^{\overline{\rm TI}}=\overline{c_{2323}}\,,
 \end{equation}
 \begin{equation}
     \label{bac5}
     c_{2323}^{\overline{\rm TI}}=\overline{\left(\frac{1}{c_{2323}}\right)}^{-1}\,,
 \end{equation}
 \begin{equation*}
 \label{bac6}
     c_{3333}^{\overline{\rm TI}}=\overline{\left(\frac{1}{c_{1111}}\right)}^{-1}\,,
 \end{equation*}
as shown by \citet{Backus1962} and discussed by \citet[Section 4.2]{Slawinski2020b}. 

The \citeauthor{Voigt1910} average results in an isotropic medium; its parameters are
\begin{equation*}
\label{voigt1}
    c_{1111}^{\rm iso}=\frac{1}{9}\left(2(c_{1111}+c_{2222}+c_{3333}+c_{1212}+c_{1313}+c_{2323})+c_{1122}+c_{1133}+c_{2233}\right)
\end{equation*}
and
\begin{equation*}
\label{voigt2}
    c_{2323}^{\rm iso}=\frac{1}{18}\left(c_{1111}+c_{2222}+c_{3333}+4(c_{1212}+c_{1313}+c_{2323})-(c_{1122}+c_{1133}+c_{2233})\right)\,,
\end{equation*}
where $c_{ijk\ell}$ refers to a generally anisotropic tensor.
We use Frobenius-21 norm,~$F_{21}$, because, according to \citet{DanekEtAl2015}, it lends itself to statistical analysis more easily than Frobenius-36 norm,~$F_{36}$.
Parameters $c^{\rm iso}_{1111}$ and $c^{\rm iso}_{2323}$ represent the closest isotropic tensor---in the $F_{21}$ sense---to the anisotropic one.

For a transversely isotropic tensor that results from the \citeauthor{Backus1962} average, the Voigt average results in
\begin{equation}
   \label{bv1}
     c_{1111}^{\rm iso}=\frac{1}{9}\left(5c_{1111}^{\overline{\rm TI}}+2c_{1133}^{\overline{\rm TI}}+4c_{2323}^{\overline{\rm TI}}+2c_{3333}^{\overline{\rm TI}}\right)
\end{equation}
and
\begin{equation}
   \label{bv2}
    c_{2323}^{\rm iso}=\frac{1}{18}\left(c_{1111}^{\overline{{\rm TI}}}-2c_{1133}^{\overline{{\rm TI}}}+6c_{1212}^{\overline{{\rm TI}}}+8c_{2323}^{\overline{{\rm TI}}}+c_{3333}^{\overline{{\rm TI}}}\right)\,.
\end{equation}
Herein, expressions (\ref{bv1}) and (\ref{bv2}) represent an isotropic counterpart to a stack of layers. 
To distinguish $c_{1111}^{\rm iso}$ and $c_{2323}^{\rm iso}$ from parameters obtained by arithmetic averaging, which are denoted below by $\overline{c_{1111}}$ and $\overline{c_{2323}}$, we let

\begin{equation*}
c_{1111}^{\rm iso}=:c^{\overline{\rm BV}}_{1111}
\end{equation*}
and
\begin{equation*}
c_{2323}^{\rm iso}=:c^{\overline{\rm BV}}_{2323}\,,
\end{equation*}
where $\overline{\rm BV}$ denotes the Backus-Voigt homogenization process.
In Section~\ref{sec:second}, we use $c^{\overline{\rm BV}}_{1111}$ and $c^{\overline{\rm BV}}_{2323}$ to define parameters measuring inhomogeneity of a stack of layers.
\subsection{Thomsen parameters}
To examine the strength of anisotropy of a transversely isotropic medium, we invoke \citeauthor{Thomsen1986} parameters
 \begin{equation}\label{eq:thom1}
     \gamma=\frac{c_{1212}^{\overline{\rm TI}}-c_{2323}^{\overline{\rm TI}}}{2c_{2323}^{\overline{\rm TI}}}\,,
 \end{equation}
\begin{equation*}\label{eq:thom2}
     \delta=\frac{\left(c_{1133}^{\overline{\rm TI}}+c_{2323}^{\overline{\rm TI}}\right)^2-\left(c_{3333}^{\overline{\rm TI}}-c_{2323}^{\overline{\rm TI}}\right)^2}{2c_{3333}^{\overline{\rm TI}}\left(c_{3333}^{\overline{\rm TI}}-c_{2323}^{\overline{\rm TI}}\right)}\,,
 \end{equation*}
\begin{equation*}\label{eq:thom3}
     \epsilon=\frac{c_{1111}^{\overline{\rm TI}}-c_{3333}^{\overline{\rm TI}}}{2c_{3333}^{\overline{\rm TI}}}\,.
 \end{equation*}
 As shown by~\citet{AdamusEtAl2018}, by increasing their values, these parameters indicate an increase of inhomogeneity of a stack of isotropic layers. 
 \subsection{Stability conditions}
Stability conditions \citep[e.g.,][Section 4.3]{Slawinski2020a} originate from the necessity of expending energy to deform a material. 
This necessity is mathematically expressed by the positive definiteness of the elasticity tensor. 
In general, a tensor is positive definite if and only if all its eigenvalues are positive. 
For an isotropic elasticity tensor, this entails that
\begin{equation}
\label{ineq}
    c_{1111}>\tfrac{4}{3}\,c_{2323}\,.
\end{equation}
According to~\citet{Backus1962}, a medium obtained by Backus averaging is positive definite if the layers, prior to averaging, are also positive definite. 
Also, according to~\citet{GazisEtAl1963}, a Frobenius-norm counterpart of a positive-definite tensor is positive definite. 
Thus, it suffices to ensure condition~(\ref{ineq}) for each layer.
\section{Parameters indicating inhomogeneity}
\subsection{Inhomogeneity parameters for Backus average}\label{sec:second}
In this paper, we consider five parameters that measure the inhomogeneity of a stack of isotropic layers. 
To obtain them, we use the averaging processes and expressions stated in Section~\ref{sec:BV}.
The Backus average allows us to relate wellbore information to seismic data.

As stated by \citet{Backus1962}, isotropic layers whose $c_{2323}$ is constant result in an isotropic Backus medium. To examine the inhomogeneity of such layers, we introduce 
\begin{equation}\label{eq:Inh1}
    \mathscr{I}:=\frac{\overline{c_{1111}}-c^{\overline{\rm TI}}_{3333}}{2c^{\overline{\rm TI}}_{3333}}
\end{equation}
and
\begin{equation*}
    \mathscr{I}_{BV}:=\frac{\overline{c_{1111}}-c^{\overline{\rm BV}}_{1111}}{2c^{\overline{\rm BV}}_{1111}}\,.
\end{equation*}
Equation~(\ref{eq:Inh1}) relates the elasticity parameters of the layers to those of a transversely isotropic medium resulting from the Backus average. 
For an isotropic medium, $c^{\overline{\rm TI}}_{3333}=c^{\overline{\rm TI}}_{1111}$.
Thus, $\mathscr{I}$ indicates only the differences among $c_{1111}$ within the stack of layers, as compared to $\mathscr{I}_{BV}$, which provides more complex information about inhomogeneity, since $c^{\overline{\rm BV}}_{1111}$ depends on both  $c_{1111}$ and $c_{2323}$. 
$\mathscr{I}_{BV}$ shows the difference between two methods of homogenization of an inhomogeneous stack of isotropic layers to its isotropic counterpart.
In the inverse problem---where we only know Backus parameters provided by seismic information---$\mathscr{I}$ and $\mathscr{I}_{BV}$ cannot be used.

Another two parameters to measure inhomogeneity are
\begin{equation*}
    \gamma=\frac{\overline{c_{2323}}-c_{2323}^{\overline{\rm TI}}}{2c_{2323}^{\overline{\rm TI}}}= \frac{c_{1212}^{\overline{\rm TI}}-c_{2323}^{\overline{\rm TI}}}{2c_{2323}^{\overline{\rm TI}}}
\end{equation*}
and
\begin{equation*}
    \gamma_{BV}:=\frac{\overline{c_{2323}}-c_{2323}^{\overline{\rm BV}}}{2c_{2323}^{\overline{\rm BV}}}= \frac{c_{1212}^{\overline{\rm TI}}-c_{2323}^{\overline{\rm BV}}}{2c_{2323}^{\overline{\rm BV}}}\,,
\end{equation*}

where $\gamma$ is parameter~(\ref{eq:thom1}).
As shown in Theorem~\ref{th:one} in Appendix~\ref{sec:appa}, the Backus average of isotropic layers is isotropic if and only if $\gamma=0$\,, in contrast to parameters $\delta$ and $\epsilon$, whose zero values do not imply isotropy. 
Thus, in this paper, we do not use  $\delta$ and $\epsilon$. 

Parameter $\gamma_{BV}$---in comparison to $\gamma$---gives different information about inhomogeneity of a stack, since $c^{\overline{\rm BV}}_{2323}$ depends on both parameters $c_{1111}$ and $c_{2323}$, not only on $c_{2323}$. 
Also, $\gamma_{BV}$ distinguishes between two methods of homogenization, whereas $\gamma$ does not. 

The last parameter we use is
\begin{equation*}
    \mathscr{N}:=||C_{\rm \overline{TI}}||_{F_{21}}-||C_{\rm \overline{BV}}||_{F_{21}}\,,
\end{equation*}
which indicates the difference between the transversely isotropic tensor resulting from Backus average and the effective isotropic tensor resulting from the Backus-Voigt average, where
\begin{equation*}
    ||C_{\rm \overline{TI}}||_{F_{21}}=\left(2\left(c^{\overline{\rm TI}}_{1111}\right)^{\!2}+4\left(c_{1133}^{\overline{\rm TI}}\right)^{\!2}+2\left(c_{1111}^{\overline{\rm TI}}-2c_{1212}^{\overline{\rm TI}}\right)^{\!2}+\left(c_{3333}^{\overline{\rm TI}}\right)^{\!2}+2\left(2c_{2323}^{\overline{\rm TI}}\right)^{\!2}+\left(2c_{1212}^{\overline{\rm TI}}\right)^{\!2}\right)^{\!\frac{1}{2}}\,
\end{equation*}
and 
\begin{equation*}
||C_{\rm \overline{BV}}||_{F_{21}} =
\left(
3\left(c_{1111}^{\overline{\rm BV}}\right)^{\!2} +
6\left(c_{1111}^{\overline{\rm BV}}-2c_{2323}^{\overline{\rm BV}}\right)^{\!2} +
3\left(2c_{2323}^{\overline{\rm BV}}\right)^{\!2}
\right)^{\!\frac{1}{2}}\,.
\end{equation*}
Herein, $\mathscr{N}$ indicates inhomogeneity of a stack of layers, as well as the anisotropy of the medium.
Similarly to the Voigt average, we use the $F_{21}$ norm.
\subsection{Constant rigidity: Isotropic medium}
To illustrate parameters $\mathscr{I}$ and $\mathscr{I}_{BV}$, let us consider a stack of isotropic layers with elasticity parameters shown in Table~\ref{tab:appone1}.
\begin{table}[h]
    \centering
    \begin{tabular}{ ||p{2cm}||p{2cm}||}
\hline
 $c_{1111}$ & $c_{2323}$ \\
 \hline\hline
 10$x$&2\\
 \hline
 10&2\\
 \hline
 10$x$&2\\
 \hline
 10&2\\\hline
 10$x$&2\\
 \hline
 10&2\\\hline
 10$x$&2\\
 \hline
 10&2\\\hline
 10$x$&2\\
 \hline
 10&2\\
 \hline
\end{tabular}
    \caption{\small{Elasticity parameters for ten isotropic layers; factor $x$ controls the inhomogeneity of the stack.}}
    \label{tab:appone1}
\end{table}

Figure \ref{fig:appone1}---for $x=1$---represents homogeneous stack, where $c_{1111}=10$ and $c_{2323}=2$. 
As $x$ increases, the inhomogeneity of $c_{1111}$ increases. 

\begin{figure}[h]
    \centering
    \includegraphics[scale=0.5]{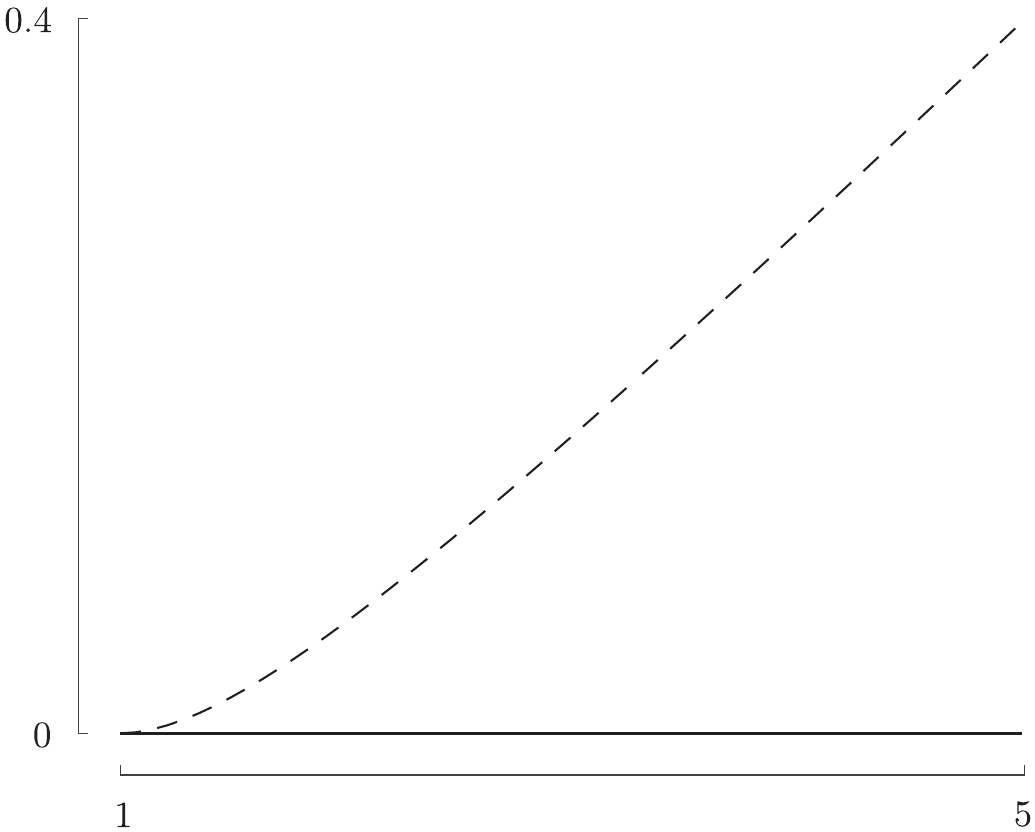}
    \caption{\small{Horizontal axis exhibits values of $x$. Values of $\mathscr{I}$ are shown by a dashed black line, $\mathscr{I}_{BV}$ by a dashed grey line, $\gamma$ by a dotted black line, $\gamma_{BV}$ by a dotted grey line and  $\mathscr{N}$ by a solid black line; $\gamma$, $\gamma_{BV}$ and $\mathscr{N}$ are overlain at zero value of vertical axis. Also, $\mathscr{I}$ and $\mathscr{I}_{BV}$ coincide.}}
    \label{fig:appone1}
\end{figure}

Only $\mathscr{I}$ and $\mathscr{I}_{BV}$ indicate growing inhomogeneity of $c_{1111}$. 
The other parameters are zero; they indicate no inhomogeneity and no anisotropy. 
$\mathscr{I}$ and $\mathscr{I}_{BV}$ are equal to each other, because, for isotropy, $c^{\overline{\rm TI}}_{3333}=c^{\overline{\rm BV}}_{1111}$. 
For the case of constant rigidity, the medium is isotropic, as a consequence $\gamma=0$; herein, Thomsen parameters $\delta$ and $\epsilon$ are also zero. 
\newpage
\subsection{Near constant rigidity: Anisotropic medium} \label{sec:quasi}
Let us consider an example of a non--significantly varying $c_{2323}$. 
\begin{table}[h]
    \centering
    \begin{tabular}{ ||p{2cm}||p{2cm}||}
\hline
 $c_{1111}$ & $c_{2323}$ \\
 \hline\hline
 10$x$&3\\
 \hline
 10&2\\
 \hline
 10$x$&3\\
 \hline
 10&2\\\hline
 10$x$&3\\
 \hline
 10&2\\\hline
 10$x$&3\\
 \hline
 10&2\\\hline
 10$x$&3\\
 \hline
 10&2\\
 \hline
\end{tabular}
    \caption{\small{Elasticity parameters for ten isotropic layers; factor $x$ controls the inhomogeneity of the stack.}}
    \label{tab:appquasi}
\end{table}

\begin{figure}[h]
    \centering
    \includegraphics[scale=0.5]{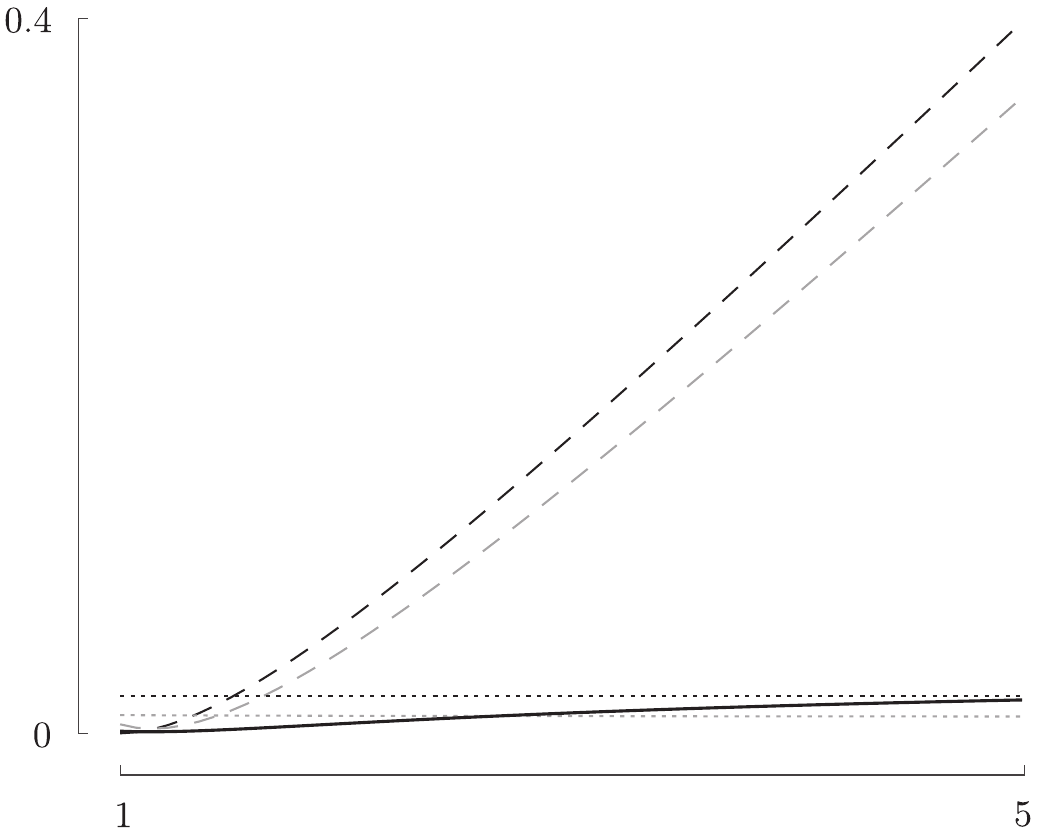}
    \caption{\small{Horizontal axis exhibits values of $x$. Values of $\mathscr{I}$ are shown by a dashed black line, $\mathscr{I}_{BV}$ by a dashed grey line, $\gamma$ by a dotted black line, $\gamma_{BV}$ by a dotted grey line and $\mathscr{N}$ by a solid black line.}}
    \label{fig:appquasi}
\end{figure}

As shown on Figure~\ref{fig:appquasi}, in general, $\mathscr{I}$ exhibits larger values than $\mathscr{I}_{BV}$. 
This stems from the exclusive dependance of inhomogeneity of $c_{1111}$ for $\mathscr{I}$.
However, for very low values of~$x$---where inhomogeneity of $c_{1111}$ is weaker than that of $c_{2323}$---$\mathscr{I}$ has lower values than $\mathscr{I}_{BV}$.
This results from the dependance of inhomogeneity of $c_{2323}$ for $\mathscr{I}_{BV}$. 
$\gamma$ is approximately twice as large as $\gamma_{BV}$; the inhomogeneity of $c_{1111}$ does not influence parameter $\gamma$ and, for the case of low inhomogeneity of $c_{2323}$, has a negligible effect on $\gamma_{BV}$, due to the nature of equation~(\ref{bv2}).
$\mathscr{N}$ represents the inhomogeneity of $c_{1111}$ and $c_{2323}$, as expected.
\subsection{Equally--scaled elasticity parameters: Anisotropic medium}\label{sec:scal}
Let us consider an example to illustrate that every parameter indicates inhomogeneity, and to exhibit the relationship between them.
In Table~\ref{tab:apptwo}, the inhomogeneity grows equally for both elasticity parameters; Figure~\ref{fig:apptwo} represents such a situation.
\begin{table}[h]
    \centering
    \begin{tabular}{ ||p{2cm}||p{2cm}||}
\hline
 $c_{1111}$ & $c_{2323}$ \\
 \hline\hline
 10$x$&2$x$\\
 \hline
 10&2\\
 \hline
 10$x$&2$x$\\
 \hline
 10&2\\\hline
 10$x$&2$x$\\
 \hline
 10&2\\\hline
 10$x$&2$x$\\
 \hline
 10&2\\\hline
 10$x$&2$x$\\
 \hline
 10&2\\
 \hline
\end{tabular}
    \caption{\small{Elasticity parameters for ten isotropic layers; factor $x$ controls the inhomogeneity of the stack.}}
    \label{tab:apptwo}
\end{table}
\begin{figure}[h]
    \centering
    \includegraphics[scale=0.5]{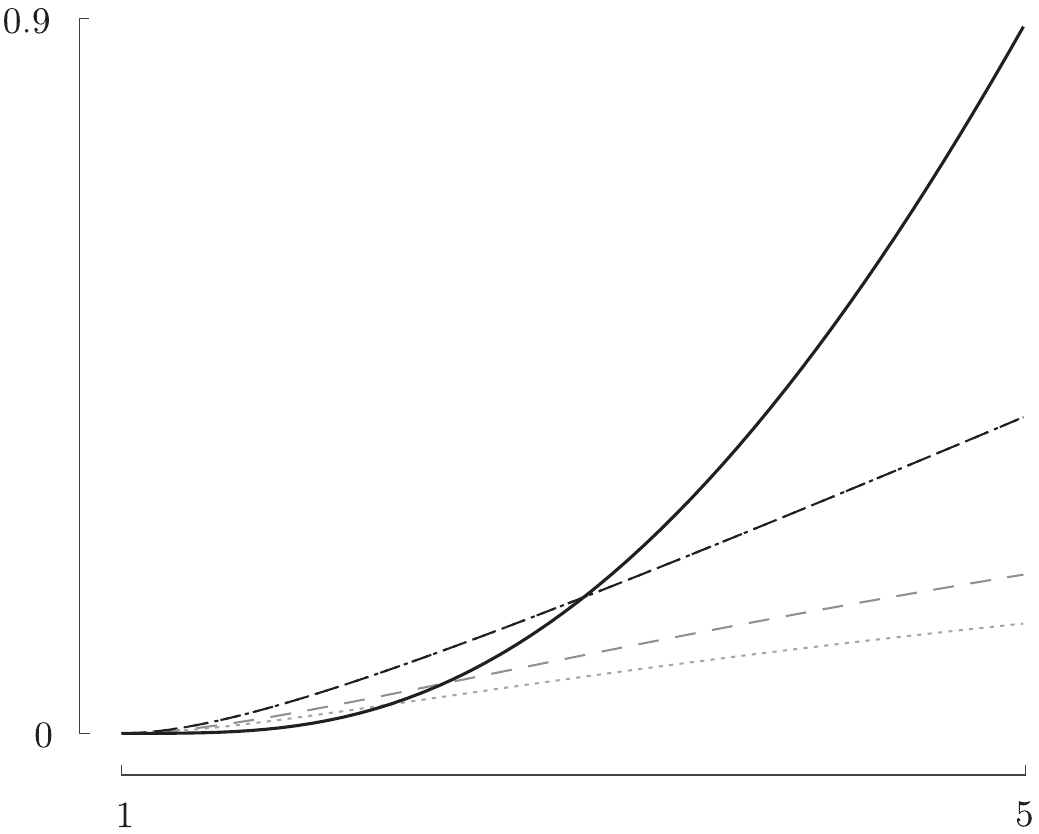}
    \caption{\small{Horizontal axis exhibits values of $x$. Values of $\mathscr{I}$ are shown by a dashed black line, $\mathscr{I}_{BV}$ by a dashed grey line, $\gamma$ by a dotted black line, $\gamma_{BV}$ by a dotted grey line and $\mathscr{N}$ by a solid black line; $\mathscr{I}$ and $\gamma$ are overlain.}}
    \label{fig:apptwo}
\end{figure}

For weak inhomogeneity, all five parameters have similar values.
Also, $\mathscr{I}$ and $\gamma$ have the same values for strong inhomogeneity. 
This comes from the fact that, in this example, the inhomogeneity of $c_{1111}$ and $c_{2323}$ grows proportionally, and $\mathscr{I}$ indicates only inhomogeneity of $c_{1111}$ while $\gamma$ of $c_{2323}$.
Thus, we conclude that for similar inhomogeneity of $c_{1111}$ and $c_{2323}$, $\mathscr{I}$ and $\gamma$ have similar values.
For strong inhomogeneity, $\mathscr{N}$ has much larger values than $\mathscr{I}$, $\mathscr{I}_{BV}$, $\gamma$ and $\gamma_{BV}$.
Comparing Figures~\ref{fig:appquasi} and \ref{fig:apptwo}, we conclude that $\mathscr{N}$ is more sensitive to the inhomogeneity of $c_{2323}$ as opposed to that of $c_{1111}$.
As the value of $x$ increases, the difference between $\gamma$ and $\gamma_{BV}$ also increases.
For $x=5$, $\gamma$ is approximately three times as large as $\gamma_{BV}$. 
Hence, a large difference between $\gamma$ and $\gamma_{BV}$ indicates strong inhomogeneity of $c_{2323}$ and---as shown in a similar example in Appendix~\ref{sec:appb}---strong inhomogeneity of $c_{1111}$.
\subsection{Real data: Anisotropic medium} \label{sec:real}
Let us consider an example using well logging measurements offshore Eastern Canada~\citep{ZhouKaderali2006}.
We segment the 2.4-kilometre region into seven layers.
Each layer exhibits varying inhomogeneity in $c_{1111}$ and $c_{2323}$\,, which results in varying amounts of anisotropy in the equivalent TI layers resulting from the Backus average.
We tabulate the Backus parameters in Table~\ref{tab:borehole}, and use them to calculate the layer values of $\mathscr{I}$\,, $\mathscr{I}_{BV}$\,, $\gamma$\,, $\gamma_{BV}$\,, and $\mathscr{N}$\,, which we display in Figure~\ref{fig:borehole}.

\renewcommand{\arraystretch}{1.3}
\begin{table}[h]
    \centering
    \begin{tabular}{ ||p{1cm}||p{1cm}|p{1cm}||p{1cm}|p{1cm}|p{1cm}|p{1cm}|p{1cm}||}
\hline
 Layers & $\overline{c_{1111}}$ &$\overline{c_{2323}}$&$c_{1111}^{\overline{\rm TI}}$&$c_{1133}^{\overline{\rm TI}}$&$c_{1212}^{\overline{\rm TI}}$&$c_{2323}^{\overline{\rm TI}}$&$c_{3333}^{\overline{\rm TI}}$ \\
 \hline\hline
 1&7.21&1.69&7.20&3.81&1.69&1.66&7.17\\
 \hline
 2&9.55&2.65&9.53&4.21&2.65&2.52&9.35\\
 \hline
 3&14.77&4.33&14.74&6.06&4.33&4.26&14.61  \\
 \hline
 4&14.68&4.75&14.64&5.18&4.75&4.62&14.54\\\hline
 5&15.19&5.70&15.16&3.79&5.70&5.68&15.18\\
 \hline
 6&15.94&6.64&15.83&2.50&6.64&6.57&15.60\\\hline
 7&15.70&5.40&15.70&4.91&5.40&5.36&15.66\\
 \hline
\end{tabular}
    \caption{\small{Arithmetic averaging of $c_{1111}$ and $c_{2323}$, and Backus parameters for seven thick layers composed of many thin layers. Units of parameters are $[{\rm m}^2/{\rm s}^2]\times10^6$.}}
    \label{tab:borehole}
\end{table}

\begin{figure}[h]
    \centering
    \includegraphics[scale=0.3]{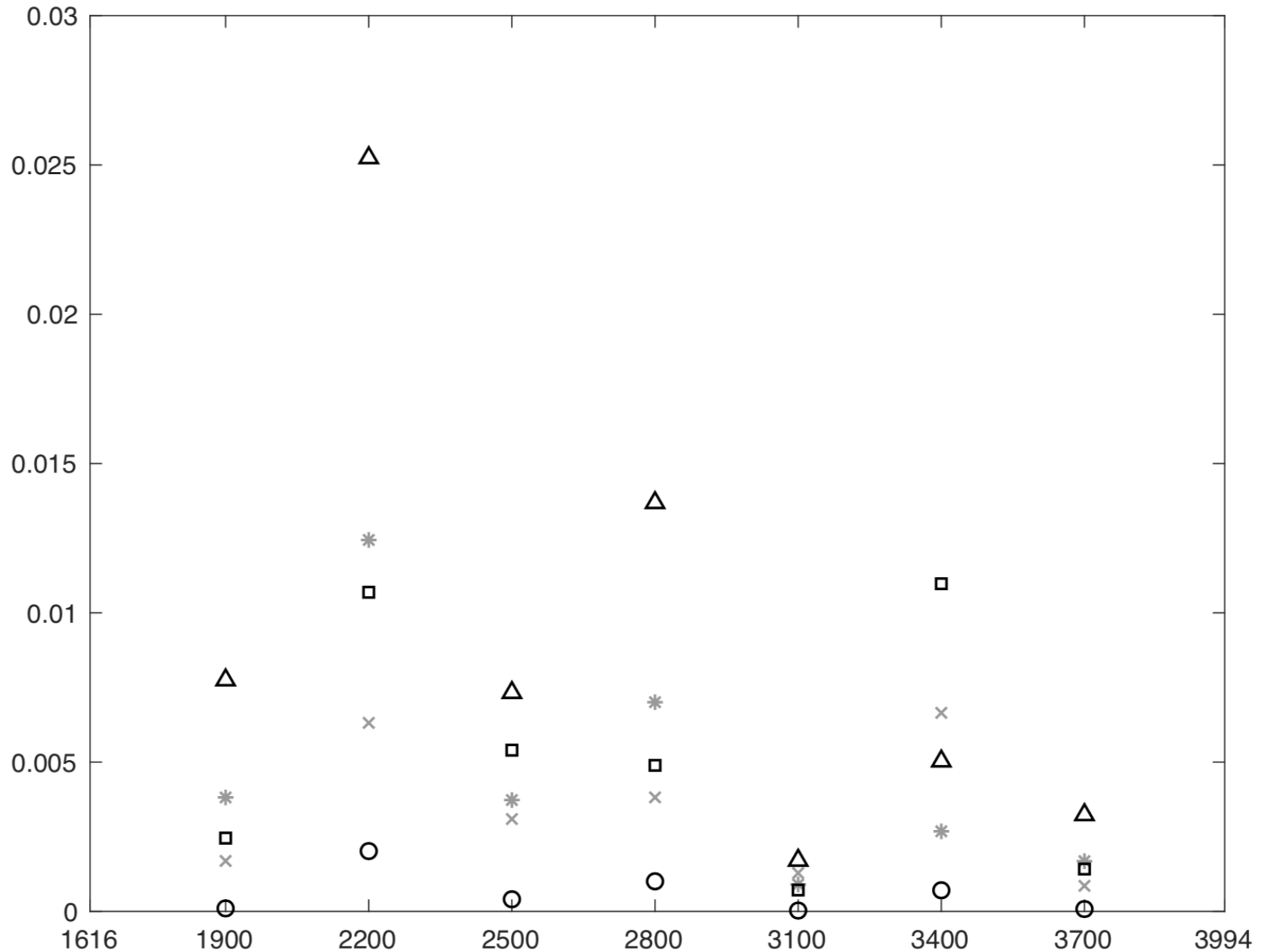}
    \caption{\small{Bottom axis exhibits depth, values of parameter $\mathscr{I}$ is shown by black squares, $\mathscr{I}_{BV}$ by grey crosses, $\gamma$ by black triangles, $\gamma_{BV}$ by grey stars and  $\mathscr{N}$ by black circles.}}
    \label{fig:borehole}
\end{figure}

As indicated by the values of the vertical axis in Figure~\ref{fig:borehole}, the region considered exhibits low anisotropy and inhomogeneity, especially in the fifth layer.
Throughout the entire region, $\mathscr{N}<\gamma$\,, which, referring to Section~\ref{sec:quasi}, is the result of weak inhomogeneity in both $c_{1111}$ and $c_{2323}$\,.
Since $\mathscr{I}$ indicates only inhomogeneity in $c_{1111}$\,, while $\gamma$ only of $c_{2323}$\,, we deduce that---apart from the sixth layer---the inhomogeneity of $c_{2323}$ is stronger than that of $c_{1111}$\,.
In the sixth layer, $\mathscr{I}$ and $\mathscr{I}_{BV}$ have larger values than $\gamma$\,, which indicates that the layer is more inhomogeneous in both $c_{1111}$ and $c_{2323}$\,; this is confirmed by $\mathscr{N}$\,, which is less sensitive to $c_{1111}$ than $c_{2323}$ and has a value less than $\gamma$\,.
Furthermore, the values of $\gamma$ are approximately twice as large as the value of $\gamma_{BV}$ in all layers, which indicates that the inhomogeneity of $c_{2323}$ is large.
However, since $\gamma$ and $\mathscr{I}$ are not very similar, we deduce that the inhomogeneity between each $c_{1111}$ and $c_{2323}$ is relatively different.

\subsection{Real data: Near constant--rigidity case}
\label{sec:NearConstantRigidity}
Let us consider a portion of layer one, whose elasticity parameters indicate a near-constant rigidity of layers.
We extract eight well log measurements that correspond to a 1.2-metre interval and tabulate their density-scaled elasticity parameters in Table~\ref{tab:borehole2a}.
We perform the Backus average on the interval and tabulate the Backus parameters, along with the relevant anisotropy parameters, in Table~\ref{tab:borehole2}.

\renewcommand{\arraystretch}{1.2}
\begin{table}[h]
    \centering
    \begin{tabular}{ ||p{2cm}||p{2cm}||}
\hline
\multicolumn{2}{||c||}{$[{\rm m}^2/{\rm s}^2]\times10^{6}$}\\
\hline \hline
 $c_{1111}$ & $c_{2323}$ \\
 \hline\hline
 7.74033	& 1.83139 \\ \hline
7.84884	& 1.83647 \\ \hline
8.12686	& 1.84012 \\ \hline
8.46952	& 1.84051 \\ \hline
8.83099	& 1.84097 \\ \hline
9.10564	& 1.83502 \\ \hline
9.12784	& 1.82696 \\ \hline
8.88130	& 1.82287 \\ \hline
\end{tabular}
    \caption{\small{Values exhibit isotropic elasticity parameters for each of eight thin layers. Parameter $c_{1111}$ varies significantly, while $c_{2323}$ remains nearly constant.}}
    \label{tab:borehole2a}
\end{table}
\renewcommand{\arraystretch}{1.3}
\begin{table}[h]
\centering
\begin{tabular}{||p{1.8cm}||*{2}{p{1.4cm}||}} 
\hline
\multicolumn{2}{||c||}{$[{\rm m}^2/{\rm s}^2]\times10^{6}$}\\
 \hline\hline
$c^{\overline{\rm TI}}_{1111}$&8.48373         \\ \hline
$c^{\overline{\rm TI}}_{1133}$&4.81539                \\ \hline
$c^{\overline{\rm TI}}_{1212}$&1.83429              \\ \hline
$c^{\overline{\rm TI}}_{2323}$&1.83427                    \\ \hline
$c^{\overline{\rm TI}}_{3333}$&8.48419                    \\ \hline
\end{tabular}
\quad
\begin{tabular}{||p{2.6cm}||*{2}{p{1.4cm}||}} 
\hline 
\multicolumn{2}{||c||}{$\times10^{-6}$}\\ \hline \hline
$\mathscr{I}$&1899.34\\
\hline
$\mathscr{I}_{BV}$&1917.93\\
\hline
$\gamma$&5.862
\\
\hline
$\gamma_{BV}$&2.673
\\
\hline
$\mathscr{N}\, \small{[{\rm m}^2/{\rm s}^2]\times10^6}$&0.005
\\
\hline
\end{tabular}
 \caption{\small{Values exhibit Backus elasticity parameters, $\mathscr{I}$, $\mathscr{I}_{BV}$, $\gamma$, $\gamma_{BV}$ and $\mathscr{N}$. }}
    \label{tab:borehole2}
\end{table}

Thomsen parameter $\gamma$ confirms that the layer is nearly isotropic, which is a consequence of the nearly constant elasticity parameter $c_{2323}$\,.
The values of $\mathscr{I}$ and $\mathscr{I}_{BV}$ are large in comparison to the other parameters, which indicates inhomogeneity of $c_{1111}$ between the layers.
However, $\mathscr{I}_{BV}>\mathscr{I}$\,, and thus, the inhomogeneity is not very significant, which is in agreement with Figure~\ref{fig:appquasi} for low values of scale factor $x$\,.
Within this layer, the value of $\gamma$ is nearly twice as large as the value of $\gamma_{BV}$\,, which---as indicated by Figures~\ref{fig:appquasi} and~\ref{fig:borehole}---is characteristic of a near constant-rigidity case.
Nevertheless, since the value of $\mathscr{N}$ is very low, we determine that the inhomogeneity of $c_{1111}$ and $c_{2323}$ is very low and, as such, the Backus average results in a nearly isotropic layer.

\section{Conclusions}
The five parameters stated in Section~\ref{sec:second} allow us to examine the inhomogeneity of a stack of layers resulting in a Backus medium.
In the case of isotropic layers with constant $c_{2323}$, we require $\mathscr{I}$ or $\mathscr{I}_{BV}$ to measure inhomogeneity using the Backus average. 
In this special case, the resulting medium is isotropic; hence the Thomsen parameters are equal to zero and they do not indicate the intrinsic inhomogeneity of a Backus medium.

$\mathscr{N}$ appears to be particularly useful in measuring inhomogeneity as it relies on both $c_{1111}$ and $c_{2323}$.
By combining the properties of three Thomsen parameters, it shows complex inhomogeneity.
It can be used in the inverse problem---where we only know the Backus parameters provided by seismic information---the same way as $\gamma$ and $\gamma_{BV}$.

Also, the relationship between $\gamma$ and $\gamma_{BV}$ indicates the inhomogeneity of $c_{2323}$, alongside the minor auxiliary influence of the inhomogeneity of $c_{1111}$.
For the case of near-constant rigidity, the relationship is approximately 2:1; the influence of $c_{1111}$ on this relationship is very small.
Stronger inhomogeneity of $c_{2323}$ affects this relationship.
In such a case, the influence of the inhomogeneity of $c_{1111}$ also increases; the relationship can reach 3:1 or more.

The relationship between $\mathscr{I}$ and $\mathscr{I}_{BV}$ may be insightful.
Larger values of $\mathscr{I}$ are characteristic for strong inhomogeneity of $c_{1111}$.
The case, where $\mathscr{I}_{BV}$ is larger, indicates low inhomogeneity of $c_{1111}$ and stronger influence of $c_{2323}$.

Similar values of parameters $\mathscr{I}$ and $\gamma$ indicate the case of similarly scaled $c_{1111}$ and $c_{2323}$, as shown in Section~\ref{sec:scal}.

In summary, the five parameters may be used to show the inhomogeneity, beyond Thomsen parameters, especially in the case of near-constant rigidity.
As is exemplified in Sections~\ref{sec:real} and~\ref{sec:NearConstantRigidity}, where real data are used, the five parameters allow us to perform a detailed analysis of the region. 
They indicate the seven layer medium to be of low anisotropy and inhomogeneity and illustrate the Backus average of the thin stack of layers to be nearly isotropic.

\section*{Acknowledgements}
We wish to acknowledge the supervision of Michael A. Slawinski, the proofreading and editing of Theodore Stanoev and the proofreading and graphic support of Elena Patarini.
This research was performed in the context of The Geomechanics Project supported by Husky Energy. 
Also, this research was partially supported by the Natural Sciences and Engineering Research Council of Canada, grant 238416-2013.

\bibliographystyle{apa}
\bibliography{AK}
\begin{appendix}
\section{Theorem A.1} \label{sec:appa}
\begin{theorem}\label{th:one}
The Backus average of isotropic layers is isotropic if and only if its $\gamma=0$\,, in contrast to parameters $\delta$ and $\epsilon$, whose zero values do not imply isotropy.
\end{theorem}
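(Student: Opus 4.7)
The plan is to split the biconditional and address the two directions separately. The ``only if'' direction is essentially immediate: isotropy of the Backus tensor requires, in particular, $c_{1212}^{\overline{\rm TI}} = c_{2323}^{\overline{\rm TI}}$, which by definition~(\ref{eq:thom1}) is $\gamma = 0$.

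For the ``if'' direction, the first step is to show that $\gamma = 0$ forces $c_{2323}$ to be constant across the stack. By~(\ref{bac4}) and~(\ref{bac5}), $\gamma = 0$ is equivalent to $\overline{c_{2323}} = \overline{1/c_{2323}}^{-1}$, i.e., the arithmetic and harmonic means of $c_{2323}$ coincide. The stability condition~(\ref{ineq}) ensures $c_{2323} > 0$ in every layer, so the AM--HM inequality applies and its equality case forces $c_{2323} \equiv \mu$ for some constant $\mu > 0$.

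The second step is to substitute $c_{2323} \equiv \mu$ into the remaining Backus formulas and verify the two outstanding TI-to-isotropy relations, namely $c_{1111}^{\overline{\rm TI}} = c_{3333}^{\overline{\rm TI}}$ and $c_{1133}^{\overline{\rm TI}} = c_{1111}^{\overline{\rm TI}} - 2c_{2323}^{\overline{\rm TI}}$. Writing $H := \overline{1/c_{1111}}^{-1}$, a short calculation yields $c_{2323}^{\overline{\rm TI}} = c_{1212}^{\overline{\rm TI}} = \mu$, $c_{3333}^{\overline{\rm TI}} = H$, and $c_{1133}^{\overline{\rm TI}} = H - 2\mu$, while the formula for $c_{1111}^{\overline{\rm TI}}$ reduces to $(1 - 2\mu/H)^2 H + 4\mu(1 - \mu/H)$, which collapses identically to $H$. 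All three isotropy conditions then hold simultaneously, so the Backus tensor is isotropic.

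For the contrast statement, I would exhibit two small explicit counterexamples (a pair of distinct layers each) in which $\epsilon = 0$ or $\delta = 0$ while $\gamma \neq 0$; by the main equivalence the corresponding Backus medium is then anisotropic, showing that neither $\epsilon = 0$ nor $\delta = 0$ implies isotropy. The main obstacle is not conceptual but algebraic: the collapse of $c_{1111}^{\overline{\rm TI}}$ to $H$ relies crucially on the Backus expression being the square of an average rather than the average of a square, and this subtlety must be respected when expanding $(1 - 2\mu/H)^2 H + 4\mu(1 - \mu/H)$ and confirming the cancellation of the $-4\mu + 4\mu^2/H$ and $+4\mu - 4\mu^2/H$ terms.
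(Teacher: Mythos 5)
Your proposal is correct and follows essentially the same route as the paper: both directions of the equivalence hinge on $\gamma=0$ being the equality case of the arithmetic--harmonic mean comparison for $c_{2323}$, which forces $c_{2323}$ to be constant, and the claim about $\delta$ and $\epsilon$ is settled by counterexamples. Two differences are worth noting. First, at the final step of the ``if'' direction the paper simply cites Backus's result that layers with constant $c_{2323}$ average to an isotropic medium, whereas you verify it directly by substituting $c_{2323}\equiv\mu$ into the averaging formulas and checking $c_{2323}^{\overline{\rm TI}}=c_{1212}^{\overline{\rm TI}}=\mu$, $c_{1111}^{\overline{\rm TI}}=c_{3333}^{\overline{\rm TI}}=H$ and $c_{1133}^{\overline{\rm TI}}=H-2\mu$; your algebra here is right (the cancellation of $-4\mu+4\mu^{2}/H$ against $4\mu-4\mu^{2}/H$ does go through because the first Backus term is the square of an average), and this makes the argument more self-contained than the paper's. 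You also make explicit the positivity of $c_{2323}$ needed for the AM--HM equality case, which the paper asserts without comment. Second, for the $\delta$ and $\epsilon$ part you only promise counterexamples rather than exhibit them; the paper supplies both an abstract anisotropic transversely isotropic tensor with $c_{3333}^{\overline{\rm TI}}=2c_{1133}^{\overline{\rm TI}}=4c_{2323}^{\overline{\rm TI}}$ (checked to be genuinely anisotropic via its eigenvalue multiplicities) and concrete two-layer stacks, e.g.\ layers $(c_{1111},c_{2323})=(4,1)$ and $(1,0.25)$ giving $\delta=0$ with $\gamma=0.281$. Your plan is sound, but the proof is not complete until such examples are actually written down and verified to arise from admissible isotropic layers.
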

\begin{proof}
	\begin{lemma}
	If $\gamma=0$\,, then the Backus average of isotropic layers is isotropic.
	\end{lemma}
	\begin{proof}
	If $\gamma=0$, then $c_{1212}^{\overline{\rm TI}}=c_{2323}^{\overline{\rm TI}}$, and in accordance with expressions~(\ref{bac4}) and (\ref{bac5}),
	\begin{equation*}
	c_{1212}^{\overline{\rm TI}}=\overline{c_{2323}}\,
	\end{equation*}
	and
	\begin{equation*}
	c_{2323}^{\overline{\rm TI}}=\overline{\left(\frac{1}{c_{2323}}\right)}^{-1}\,,
	\end{equation*}
	we obtain,
	\begin{equation}\label{eq:last}
	\overline{c_{2323}}=\overline{\left(\frac{1}{c_{2323}}\right)}^{-1}\,.
	\end{equation}
	Equation~(\ref{eq:last}) is true, if and only if, $c_{2323}$ is constant.
	As stated by \citet{Backus1962} and discussed by \citet{AdamusEtAl2018}, layers whose $c_{2323}$ is constant result in an isotropic \citeauthor{Backus1962} medium.
	
	If Backus average is isotropic, then $c_{1212}^{\overline{\rm TI}}=c_{2323}^{\overline{\rm TI}}$, and, hence, $\gamma=0$\,.
	\end{proof}
	\begin{lemma}\label{lem:delta}
	If $\delta=0$\,, it does not follow that the Backus average of isotropic layers is isotropic.
	\end{lemma}
	\begin{proof}
	If $\delta=0$\,, then 
	\begin{equation}\label{eq:tue}
	\left(c_{1133}^{\overline{\rm TI}}+c_{2323}^{\overline{\rm TI}}\right)^2-\left(c_{3333}^{\overline{\rm TI}}-c_{2323}^{\overline{\rm TI}}\right)^2=\left(c_{1133}^{\overline{\rm TI}}\right)^2-\left(c_{3333}^{\overline{\rm TI}}\right)^2+2c_{2323}^{\overline{\rm TI}}\left(c_{1133}^{\overline{\rm TI}}+c_{3333}^{\overline{\rm TI}}\right)=0\,.
	\end{equation}
	Let us consider anisotropic Backus medium (from Proposition~\ref{prop:prop}); $c_{3333}^{\overline{\rm TI}}=2c_{1133}^{\overline{\rm TI}}=4c_{2323}^{\overline{\rm TI}}$. 
        In such a case, equation~(\ref{eq:tue}) becomes
	\begin{equation*}
	4\left(c_{2323}^{\overline{\rm TI}}\right)^2-16\left(c_{2323}^{\overline{\rm TI}}\right)^2+4\left(c_{2323}^{\overline{\rm TI}}\right)^2+8\left(c_{2323}^{\overline{\rm TI}}\right)^2=0\,,
	\end{equation*}
	which remains true for an anisotropic Backus average.
	\end{proof}
	\begin{lemma}\label{lem:eps}
	If $\epsilon=0$\,, it does not follow that the Backus average of isotropic layers is isotropic.
	\end{lemma}
	\begin{proof}
	If $\epsilon=0$\,, then 
	\begin{equation}\label{eq:monday2}
	c_{1111}^{\overline{\rm TI}}=c_{3333}^{\overline{\rm TI}}\,.
	\end{equation}
	Let us consider an anisotropic Backus, where $c_{2323}^{\overline{\rm TI}}\neq c_{1212}^{\overline{\rm TI}}$, $c_{1133}^{\overline{\rm TI}}\neq c_{1111}^{\overline{\rm TI}}-2c_{2323}^{\overline{\rm TI}}$ and $c_{1111}^{\overline{\rm TI}}=c_{3333}^{\overline{\rm TI}}$.
	Equation~(\ref{eq:monday2}) becomes
	\begin{equation*}
	c_{3333}^{\overline{\rm TI}}=c_{3333}^{\overline{\rm TI}}\,,
	\end{equation*}
	which remains true for an anisotropic Backus average.
	\end{proof}
	\begin{proposition}\label{prop:prop}
	A transversely isotropic tensor---with $c_{3333}^{\overline{\rm TI}}=2c_{1133}^{\overline{\rm TI}}=4c_{2323}^{\overline{\rm TI}}$---remains transversely isotropic.	
	\end{proposition}
	\begin{proof}
	Consider
\begin{equation*}
C=
\left[
\begin{array}{cccccc}
c_{1111} & c_{1111}-2c_{1212} & 2c_{2323} & 0 & 0 & 0\\
c_{1111}-2c_{1212} & c_{1111} & 2c_{2323} & 0 & 0 & 0\\
2c_{2323} & 2c_{2323} & 4c_{2323} & 0 & 0 & 0\\
0 & 0 & 0 & 2c_{2323} & 0 & 0\\
0 & 0 & 0 & 0 & 2c_{2323} & 0\\
0 & 0 & 0 & 0 & 0 & 2c_{1212}\end{array}
\right]
\,.
\end{equation*}
Its eigenvalues are
\begin{equation*}
\lambda_1=c_{1111}-c_{1212}+2c_{2323}-\sqrt{c_{1111}^2-4c_{1111}c_{2323}-2c_{1111}c_{1212}+12c_{2323}^2+4c_{2323}c_{1212}+c_{1212}^2}\,,
\end{equation*}
\begin{equation*}
\lambda_2=c_{1111}-c_{1212}+2c_{2323}+\sqrt{c_{1111}^2-4c_{1111}c_{2323}-2c_{1111}c_{1212}+12c_{2323}^2+4c_{2323}c_{1212}+c_{1212}^2}\,,
\end{equation*}
\begin{equation*}
\lambda_3=\lambda_4=2c_{2323}\,,
\end{equation*}
\begin{equation*}
\lambda_5=\lambda_6=2c_{1212}\,.
\end{equation*}
The eigenvalue multiplicities and their corresponding spaces of eigentensors---according to the Theorem 4.3 of~\citet{BonaEtAl2007}---imply that $C$ is a transversely isotropic tensor, as required.
	\end{proof}
	This completes the proof.
\end{proof}

To numerically exemplify Lemma~\ref{lem:delta}, let us consider a stack of two isotropic layers; herein, the parameters for the first layer are $c_{1111}=4$ and $c_{2323}=1$, while for the second $c_{1111}=1$ and $c_{2323}=0.250$. 
For such a case, the Backus average---wherein $c_{1111}^{\overline{\rm TI}}=2.275$, $c_{1133}^{\overline{\rm TI}}=0.800$, $c_{1212}^{\overline{\rm TI}}=0.625$, $c_{2323}^{\overline{\rm TI}}=0.400$ and $c_{3333}^{\overline{\rm TI}}=1.600$---is not isotropic. 
Specifically, $\delta=0$, $\epsilon=0.211$ and $\gamma=0.281$.

To numerically exemplify Lemma~\ref{lem:eps}, let us consider a stack of two isotropic layers; herein, the parameters for the first layer are $c_{1111}=2$ and $c_{2323}=1$, while for the second $c_{1111}=1.200$ and $c_{2323}=0.200$. 
For such a case, the Backus average---wherein $c_{1111}^{\overline{\rm TI}}=1.500$, $c_{1133}^{\overline{\rm TI}}=0.500$, $c_{1212}^{\overline{\rm TI}}=0.600$, $c_{2323}^{\overline{\rm TI}}=0.333$ and $c_{3333}^{\overline{\rm TI}}=1.500$---is not isotropic.
Specifically, $\epsilon=0$, $\delta=-0.190$ and $\gamma=0.400$.
\newline
\begin{proposition}
The Backus average of isotropic layers is isotropic if and only if its $\delta=0$ and $\epsilon=0$.
\end{proposition}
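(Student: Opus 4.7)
The forward direction is a short verification. If the Backus average is isotropic, then $c^{\overline{\rm TI}}_{1111}=c^{\overline{\rm TI}}_{3333}$ gives $\epsilon=0$ immediately, and $c^{\overline{\rm TI}}_{1133}=c^{\overline{\rm TI}}_{3333}-2c^{\overline{\rm TI}}_{2323}$ makes the numerator of $\delta$ collapse: $(c^{\overline{\rm TI}}_{3333}-c^{\overline{\rm TI}}_{2323})^{2}-(c^{\overline{\rm TI}}_{3333}-c^{\overline{\rm TI}}_{2323})^{2}=0$.

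For the converse, the plan is to show that $\delta=\epsilon=0$ force $c_{2323}$ to be constant across the layers, after which Theorem~\ref{th:one} delivers isotropy. Set $\alpha_{i}:=c_{2323}^{(i)}$, $\beta_{i}:=c_{1111}^{(i)}$, $r_{i}:=\alpha_{i}/\beta_{i}$ and $B:=\overline{1/\beta}$. Substituting the Backus expressions into $c^{\overline{\rm TI}}_{1111}=c^{\overline{\rm TI}}_{3333}$ and using $(1-2r)^{2}-1=-4r(1-r)$ reduces $\epsilon=0$ to
\[
\overline{r(1-r)}=B\,\overline{\alpha(1-r)}.
\]
For $\delta=0$ I would factor $(c^{\overline{\rm TI}}_{1133}+c^{\overline{\rm TI}}_{2323})^{2}-(c^{\overline{\rm TI}}_{3333}-c^{\overline{\rm TI}}_{2323})^{2}=0$ and resolve the sign ambiguity by stability: the ``minus'' branch forces $c^{\overline{\rm TI}}_{1133}=-c^{\overline{\rm TI}}_{3333}$, but $\alpha_{i}/\beta_{i}<3/4$ (from inequality~(\ref{ineq})) gives $\overline{1-2r}>-1/2$, hence $c^{\overline{\rm TI}}_{1133}>-c^{\overline{\rm TI}}_{3333}/2$, ruling this out. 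Combined with $\epsilon=0$, the remaining branch yields
\[
\overline{r}\,\overline{1/\alpha}=B,
\]
equivalently $\mathrm{Cov}(r,1/\alpha)=0$.

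The decisive step is to combine the two identities to force $\alpha$ constant. Substituting the second into the first and expanding products of averages through the Lagrange identity $\overline{fg}-\overline{f}\,\overline{g}=\tfrac{1}{2}\sum_{i,j}w_{i}w_{j}(f_{i}-f_{j})(g_{i}-g_{j})$ rearranges the equation to
\[
\tfrac{1}{2}\sum_{i,j}w_{i}w_{j}\,r_{i}r_{j}\,\frac{(\alpha_{i}-\alpha_{j})^{2}}{\alpha_{i}\alpha_{j}}=\overline{r}\,\bigl(\overline{\alpha}\,\overline{1/\alpha}-1\bigr)+\mathrm{Var}(r),
\]
so that a non-negative quadratic form in the gaps $\alpha_{i}-\alpha_{j}$ (weighted by $r_{i}r_{j}$) is matched against two non-negative quantities on the right that are in turn quadratic forms in the gaps $\alpha_{i}-\alpha_{j}$ and $r_{i}-r_{j}$. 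Using the stability bound $r_{i}<3/4<1$ to dominate the coefficients $r_{i}r_{j}$ by $\overline{r}$ plus the quadratic variation in $r$, I would conclude that every pair $(\alpha_{i}-\alpha_{j})^{2}$ must contribute zero, so $c_{2323}$ is constant and Theorem~\ref{th:one} supplies the isotropy.

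The hardest part is this final coefficient comparison. Cauchy--Schwarz only gives $\overline{\alpha r}\,\overline{r/\alpha}\ge\overline{r}^{2}$---with equality characterising constant $\alpha$, but a lower rather than upper bound. The matching upper bound must exploit the second identity (to replace $\overline{r/\alpha}$ by $\overline{r}\,\overline{1/\alpha}$) together with the stability constraint in an essential way; without that Backus-specific interplay the two covariance conditions alone are compatible with non-isotropic TI tensors, as Lemmas~\ref{lem:delta} and \ref{lem:eps} already show for either condition taken on its own.
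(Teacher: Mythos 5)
Your forward direction is fine, and your reduction of the two conditions is mostly sound --- in particular, ruling out the spurious branch $c_{1133}^{\overline{\rm TI}}=-c_{3333}^{\overline{\rm TI}}$ of $\delta=0$ via the stability bound $c_{2323}/c_{1111}<3/4$ is a point worth making explicit (one small slip: since the Backus formula contains $\overline{(1-2r)}^{\,2}$, the square of the mean, the $\epsilon=0$ identity reduces to $\overline{r}\,(1-\overline{r})=B\,\overline{\alpha(1-r)}$, not $\overline{r(1-r)}=B\,\overline{\alpha(1-r)}$). The genuine gap is the ``decisive step.'' Having reduced the hypotheses to the two covariance conditions $\mathrm{Cov}(1/\alpha,r)=0$ and $\mathrm{Cov}(1/\alpha,\alpha(1-r))=0$, you assert that a weighted Lagrange expansion plus a ``coefficient comparison'' using $r_i<3/4$ forces every gap $\alpha_i-\alpha_j$ to vanish --- but you never exhibit the inequality that does this. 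Your own closing paragraph concedes that Cauchy--Schwarz points the wrong way and that the required upper bound ``must exploit'' the second identity and stability ``in an essential way'' without saying how. That is a description of what the hardest step would need to accomplish, not a proof of it; as written, nothing excludes a non-constant $\alpha$ satisfying both covariance conditions inside the stability box, so the converse is not established.

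For comparison, the paper avoids this difficulty entirely by proving the proposition only for a stack of \emph{two} layers $(a,c)$ and $(b,d)$: there the conditions $\delta=0$ and $\epsilon=0$ factor explicitly as $(c-d)(bc-ad)=0$ and $(c-d)(c-d+b-a)=0$, so either $c=d$ (constant rigidity, hence isotropy by Theorem~\ref{th:one}) or one is driven to $a=c$, $b=d$, which violates the stability condition~(\ref{ineq}). Your covariance formulation is the natural route to the general-$n$ statement and is genuinely different from (and more ambitious than) the paper's argument, but to be a proof it must either supply the missing positivity estimate or retreat, as the paper does, to the two-layer case where the explicit factorization closes the argument.
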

\begin{proof}
Let us consider a stack of two isotropic layers. 
We denote elasticity parameters for the first layer as $c_{1111}=a$ and $c_{2323}=c$, and for the second as $c_{1111}=b$ and $c_{2323}=d$.
For the Backus average, $\delta=0$ if and only if 
\begin{equation}\label{eq:deltaprop}
c_{1133}^{\overline{\rm TI}}=c_{3333}^{\overline{\rm TI}}-2c_{2323}^{\overline{\rm TI}}\,.
\end{equation}
Considering equation~(\ref{eq:deltaprop}) for two layers and assuming arithmetic average, we obtain
\begin{equation}\label{eq:long}
\left(\frac{1}{2}\left(\frac{a-2c}{a}+\frac{b-2d}{b}\right)\right)\,\left(\frac{1}{2}\left(\frac{1}{a}+\frac{1}{b}\right)\right)^{\!-1}+2\left(\frac{1}{2}\left(\frac{1}{c}+\frac{1}{d}\right)\right)^{\!-1}-\left(\frac{1}{2}\left(\frac{1}{a}+\frac{1}{b}\right)\right)^{\!-1}=0\,.
\end{equation}
After laborious algebraic computation, equation~(\ref{eq:long}) simplifies to
\begin{equation}\label{eq:short}
\left(c-d\right)\,\left(bc-ad\right)=0\,.
\end{equation}

For the Backus average, $\epsilon=0$ if and only if $c_{1111}^{\overline{\rm TI}}=c_{3333}^{\overline{\rm TI}}$, which for two layers is equal to
\begin{equation}\label{eq:long2}
\left(\frac{1}{2}\left(\frac{1}{a}+\frac{1}{b}\right)\right)^{\!-1}\left(\frac{1}{2}\left( \frac{a-2c}{a}+\frac{b-2d}{b}\right)\right)^{\!2}+\left(\frac{1}{2}\left(\frac{4(a-c)c}{a}+\frac{4(b-d)d}{b}\right)\right)=\left(\frac{1}{2}\left(\frac{1}{a}+\frac{1}{b}\right)\right)^{\!-1}\,.
\end{equation}
After laborious algebraic computation, equation~(\ref{eq:long2}) simplifies to
\begin{equation}\label{eq:short2}
\left(c-d\right)\,\left(c-d+b-a\right)=0\,.
\end{equation}
To receive $\delta=0$ and $\epsilon=0$, we need to solve equations~(\ref{eq:short}) and (\ref{eq:short2}).
Both equations are satisfied by $c=d$, which means that $c_{2323}$ is constant ($\gamma=0$) and the medium is isotropic.
If $c\neq d$, then both equations are satisfied by a system of equations,
\begin{equation*}
\begin{cases} 
bc=ad \\ c-d=b-a 
\end{cases}\,.
\end{equation*}
We obtain $b=-d$ and $a=-c$, which do not satisfy stability conditions, $b>\tfrac{4}{3}d$ and $a>\tfrac{4}{3}c$\,.
\end{proof}
\newpage
\section{Relation between $\gamma$ and $\gamma_{BV}$: Anisotropic medium}\label{sec:appb}
Let us consider a case of stronger inhomogeneity than that of Section~\ref{sec:quasi}. 
As shown in Table~\ref{tab:appa}, the differences among $c_{2323}$ within the stack of layers are greater.
\begin{table}[h]
    \centering
    \begin{tabular}{ ||p{2cm}||p{2cm}||}
\hline
 $c_{1111}$ & $c_{2323}$ \\
 \hline\hline
 10$x$&4\\
 \hline
 10&1\\
 \hline
 10$x$&4\\
 \hline
 10&1\\\hline
 10$x$&4\\
 \hline
 10&1\\\hline
 10$x$&4\\
 \hline
 10&1\\\hline
 10$x$&4\\
 \hline
 10&1\\
 \hline
\end{tabular}
    \caption{\small{Elasticity parameters for ten isotropic layers; factor $x$ controls the inhomogeneity of the stack.}}
    \label{tab:appa}
\end{table}
\begin{figure}[h]
    \centering
    \includegraphics[scale=0.5]{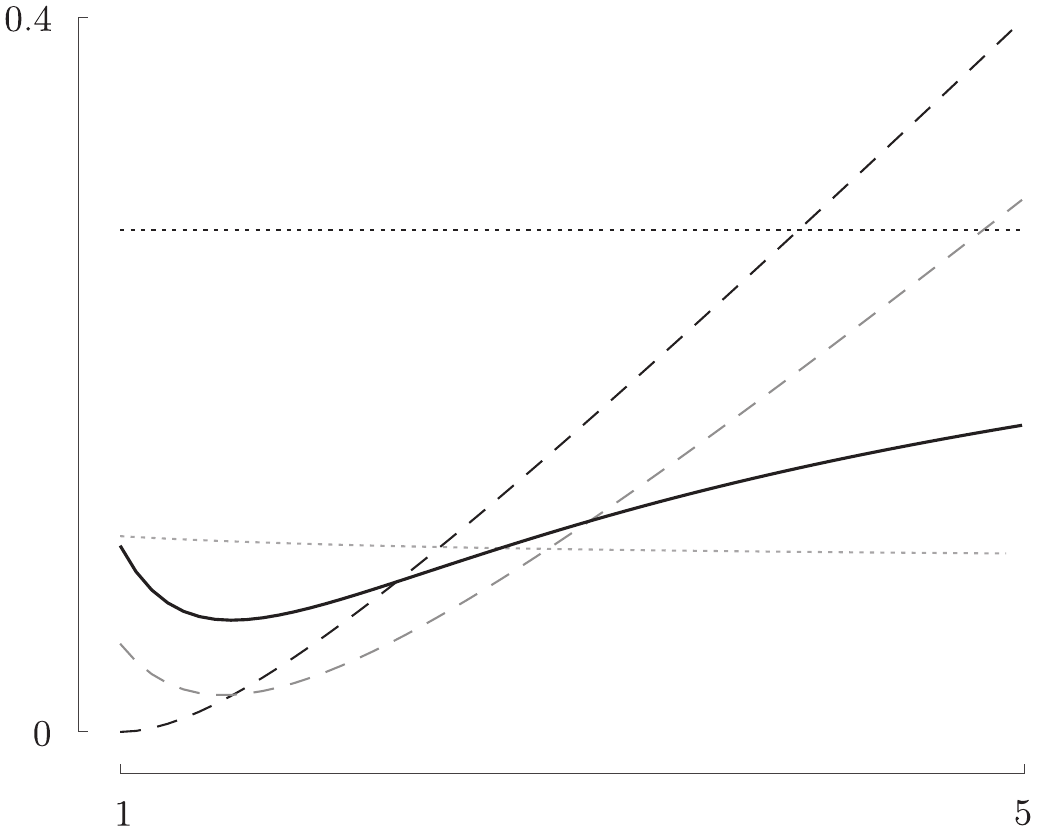}
    \caption{\small{Horizontal axis exhibits values of $x$. Values of $\mathscr{I}$ are shown by a dashed black line, $\mathscr{I}_{BV}$ by a dashed grey line, $\gamma$ by a dotted black line, $\gamma_{BV}$ by a dotted grey line and $\mathscr{N}$ by a solid black line.}}
    \label{fig:appa}
\end{figure}

As shown on Figure~\ref{fig:appa}, the relationship between $\gamma$ and $\gamma_{BV}$ is more sensitive to increasing values of $c_{1111}$, as compared to Figure~\ref{fig:appquasi}.
In other words, increasing inhomogeneity of $c_{1111}$ has a larger impact on the relationship between $\gamma$ and $\gamma_{BV}$ for strong inhomogeneity of $c_{2323}$, than for the weak one. 
Also, the relationship is larger than 2:1, due to stronger inhomogeneity of $c_{2323}$.

\end{appendix}
\end{document}